\newtheorem{claim}{}[section]
\newtheorem{theorem}[claim]{Theorem}
\renewenvironment{proof}{\noindent{\it Proof. \hskip0pt}}
                      {$\square$\par\medskip}
\begin{document}
\baselineskip 6.2 truemm
\parindent 1.5 true pc

\newcommand\lan{\langle}
\newcommand\ran{\rangle}
\newcommand\tr{{\text{\rm Tr}}\,}
\newcommand\ot{\otimes}
\newcommand\wt{\widetilde}
\newcommand\join{\vee}
\newcommand\meet{\wedge}
\renewcommand\ker{{\text{\rm Ker}}\,}
\newcommand\im{{\text{\rm Im}}\,}
\newcommand\mc{\mathcal}
\newcommand\transpose{{\text{\rm t}}}
\newcommand\FP{{\mathcal F}({\mathcal P}_n)}
\newcommand\ol{\overline}
\newcommand\JF{{\mathcal J}_{\mathcal F}}
\newcommand\FPtwo{{\mathcal F}({\mathcal P}_2)}
\newcommand\hada{\circledcirc}
\newcommand\id{{\text{\rm id}}}
\newcommand\tp{{\text{\rm tp}}}
\newcommand\pr{\prime}
\newcommand\e{\epsilon}
\newcommand\inte{{\text{\rm int}}\,}
\newcommand\ttt{{\text{\rm t}}}
\newcommand\spa{{\text{\rm span}}\,}
\newcommand\conv{{\text{\rm conv}}\,}
\newcommand\rank{\ {\text{\rm rank of}}\ }
\newcommand\vvv{\mathbb V_{m\meet n}\cap\mathbb V^{m\meet n}}
\newcommand\ppp{\mathbb P_{m\meet n} + \mathbb P^{m\meet n}}
\newcommand\re{{\text{\rm Re}}\,}
\newcommand\la{\lambda}
\newcommand\msp{\hskip 2pt}
\newcommand\ppt{\mathbb T}
\newcommand\rk{{\text{\rm rank}}\,}
\def\cc{\mathbb{C} }
\def\pp{\mathbb{P} }
\def\rr{\mathbb{R} }
\def\qq{\mathbb{Q} }
\def\a{\alpha }
\def\b{\beta }

\title{Necessary conditions for optimality of decomposable entanglement witnesses}

\author{Seung-Hyeok Kye}
\address{Department of Mathematics and Institute of Mathematics\\Seoul National University\\Seoul 151-742, Korea}
\email{kye@snu.ac.kr}

\thanks{partially supported by NRFK 2011-0001250}

\subjclass{81P15, 15A30, 46L05}

\keywords{optimal decomposable entanglement witness, product vector,
face, positive map}

\begin{abstract}
     It is well known that the support of an optimal decomposable
entanglement witness is completely entangled. We add two more
necessary conditions for the optimality: The orthogonal complement
of the support must have a nonzero product vector; another one will
be given in terms of related faces of a convex cone. With these
necessary conditions, we show that there exist examples of
non-optimal decomposable entanglement witnesses which are the
partial transposes of positive semi-definite matrices supported on
completely entangled spaces, whenever both of the local dimensions
are greater than or equal to three.\end{abstract}

\maketitle

\section{Introduction}

Denote by $M_n$ the $C^*$-algebra of all $n\times n$ matrices over the complex field.
A positive semi-definite $mn\times mn$ matrix in $M_{mn}=M_n \otimes M_m$
is said to be separable if it is the convex sum of rank one positive semi-definite
matrices onto product vectors in $\mathbb C^n\otimes \mathbb C^m$, where a product vector
is nothing but a simple tensor of the form $x\otimes y\in \mathbb C^n\otimes \mathbb C^m$.
We denote by $\mathbb V_1$ the cone of all separable ones.
A positive semi-definite matrix in $M_n\otimes M_m$
is said to be entangled if it is not separable.
The notion of entanglement is a key research area of quantum physics
in relation with possible applications to quantum information and quantum computation theory.
See \cite{guhne} and \cite{horo-survey} for related survey articles.

An entanglement witness \cite{terhal} is a block-positive matrix $W$ in $M_n\otimes
M_m$ which detects entanglement. It is said to be optimal in \cite{lew00} if it
detects a maximal set of entanglement with respect to the set inclusion.
In particular, it was proven in \cite{lew00} that if a decomposable witness is optimal
then it takes the form of the partial transpose $Q^\tau$ of
a positive semi-definite matrix $Q$. Consider the following two
conditions on the support $E$ of $Q$:
\begin{enumerate}
\item[(A)]
The partial conjugates of the product vectors in $E^\perp$ span the whole space $\mathbb C^n\otimes \mathbb C^m$.
\item[(B)]
$E$ is completely entangled, that is, there is no product vector in $E$.
\end{enumerate}
It was shown in \cite{lew00} that the condition (A) implies the optimality of $W=Q^\tau$, and the optimality of $Q^\tau$
implies the condition (B).

In the case that one of the local dimensions is $2$, it was shown in \cite{aug} that both of the above two conditions
turn out to be necessary and sufficient conditions for the optimality of $W=Q^\tau$. In general cases, it was also shown in \cite{asl} that
the condition (A) is not necessary for the optimality of $W=Q^\tau$ whenever both of the local dimensions are equal to
or greater than $3$.

The purpose of this note is to show that the condition (B) is not
sufficient for the optimality of $W=Q^\tau$, whenever both of the
local dimensions are equal to or greater than $3$. To do this, we
give two additional necessary conditions for the optimality of
$W=Q^\tau$. If $Q^\tau$ is an optimal decomposable witness then we
have the following:
\begin{enumerate}
\item[(C)]
The orthogonal complement of the support of $Q$ must have a nonzero product vector.
\item[(D)]
The convex cone consisting
of all positive semi-definite matrices supported on the support of $Q$ must be a face of the convex cone
generated by all positive semi-definite matrices and their partial transposes.
\end{enumerate}
With these additional necessary conditions,
it turns out that there exist examples of non-optimal entanglement witnesses $W=Q^\tau$
with the condition (B), unless one of the local dimensions is $2$.

By the Jamio\l kowski-Choi isomorphism \cite{choi75-10}, \cite{jami}, block-positive matrices in $M_n\otimes M_m$ correspond to positive
linear maps from $M_m$ into $M_n$. Since our result heavily depends on the boundary structures of the convex cone
consisting of all positive linear maps, we will call sometimes a positive linear map itself an entanglement witness
whenever it has the corresponding property under this isomorphism, as in the recent paper \cite{ha+kye_exposed}, where
indecomposable entanglement witnesses arising from positive linear maps generating exposed rays are considered.

In the next section, we examine the duality between entanglement and positive linear maps to get the above
necessary conditions (C) and (D). During the discussion, the above mentioned necessary condition (B) for the optimality
will be recovered in the framework of the facial structures.
In the last section, we present examples and discuss the related topics.

\section{Results}

To determine if a given positive semi-definite matrix is separable, the duality
between the tensor product $M_{mn}=M_n \otimes M_m$ and the space ${\mathcal L}(M_m,M_n)$ of
all linear maps from $M_m$ into $M_n$ plays a key role.
For $x\in M_m$, $y\in M_n$ and $\phi\in{\mathcal L}(M_m,M_n)$, we define the bilinear pairing by
\begin{equation}\label{ddual}
\langle y\otimes x, \phi \rangle =\tr (\phi(x)y^\ttt),
\end{equation}
where $y^\ttt$ denotes the transpose of $y$.
Then the cone $\mathbb V_1$ and the cone $\mathbb P_1$ consisting of all positive maps in
${\mathcal L}(M_m,M_n)$ are dual to each other \cite{eom-kye} in the following sense:
\begin{eqnarray}
A\in\mathbb V_1\ &\Longleftrightarrow\ \langle A,\phi\rangle \ge 0\ {{\rm for\ each}}\ \phi\in\mathbb P_1,\\
\phi\in\mathbb P_1\ &\Longleftrightarrow\ \langle A,\phi\rangle \ge 0\ {{\rm for\ each}}\ A\in\mathbb V_1.
\end{eqnarray}
Therefore, every entangled state is detected by
a positive map in the following sense: A positive semi-definite matrix
$A$ in $M_n\otimes M_m$ is entangled if and only if
there is a positive map $\phi\in\mathbb P_1$ such that $\langle A,\phi\rangle <0$.
This is an equivalent formulation
to the separability criterion given in \cite{horo-1} under the Jamio\l kowski-Choi isomorphism.

Basic examples of positive maps from $M_m$ into $M_n$ come from
elementary operators together with the transpose map:
$$
\phi_V: X\mapsto V^*XV,
\qquad
\phi^V: X\mapsto V^*X^\ttt V,
$$
where $V$ is an $m\times n$ matrix.
The convex sums of the first (respectively second) types are said to be completely positive (respectively
completely copositive) linear maps, and
the convex sums of completely positive maps and
completely copositive maps are said to be decomposable positive maps.
We denote by $\mathbb D$ the cone of all decomposable positive maps.

Under the Jamio\l kowski-Choi isomorphism, we recall again that positive maps
correspond to block-positive matrices. We also note that completely positive maps correspond to
positive semi-definite matrices, and completely copositive maps correspond to
the partial transposes of positive semi-definite matrices. See \cite{ssz} for more details.
In this sense, whenever it does not lead to any confusion, we call a positive map
an entanglement witness if it detects entanglement, and vice versa.

Recall that a point $x$ in a convex set $C$ is said to be an interior point of $C$
if for each $y\in C$ there is $t>1$ such that $(1-t)y+tx\in C$.
A point $x$ is an interior point of $C$ in this sense if and only if it is a relative topological interior point
of $C$ with respect to the affine manifold generated by $C$.
A point of $C$ is said to be
a boundary point if it is not an interior point.
Note also that a point $x$ in a convex set $C$ determines a unique face of $C$ in which $x$ is an interior point.
See \cite{rock}. This is the smallest face of $C$ containing the point $x$.
For a positive map $\Phi$, we denote by $\mathbb P_\Phi$ the face of $\mathbb P_1$ in which
$\Phi$ is an interior point. When $\Phi$ is decomposable,
the corresponding face generated by $\Phi$ will be denoted by $\mathbb D_\Phi$.

It was shown in \cite{lew00} that an entanglement witness
$\Phi$ is optimal if and only if it satisfies the following condition:
\begin{itemize}
\item[(O1)]
For any completely positive map $\phi$ and $t>0$, the linear map $(1+t)\Phi-t\phi$ does not belong to $\mathbb P_1$.
\end{itemize}
With the change of the variable, this is equivalent to the following:
\begin{itemize}
\item[(O2)]
For any completely positive map $\phi$ and $t>1$, the linear map $(1-t)\phi+t\Phi$ does not belong to $\mathbb P_1$.
\end{itemize}
Geometrically, an entanglement witness $\Phi\in\mathbb P_1$ is optimal
if and only if any line segment from a completely positive map
$\phi$ to $\Phi$ cannot be extended inside of $\mathbb P_1$. The following fact was shown in \cite{sar}. We include a proof
for the convenience of readers.

\begin{theorem}{Proposition}\label{ew}
An entanglement witness $\Phi\in\mathbb P_1$ is optimal if and only
if there is no completely positive map in the face $\mathbb P_\Phi$.
\end{theorem}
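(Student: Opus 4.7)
The plan is to prove both directions by exploiting the facial characterization of optimality encoded in condition (O2), together with the two basic properties of the face $\mathbb{P}_\Phi$: (i) $\Phi$ is an interior point of $\mathbb{P}_\Phi$, so every line segment in $\mathbb{P}_\Phi$ ending at $\Phi$ can be extended past $\Phi$ inside $\mathbb{P}_\Phi$; (ii) $\mathbb{P}_\Phi$ is a face, so any strict convex combination of points of $\mathbb{P}_1$ that lands in $\mathbb{P}_\Phi$ has both of its endpoints in $\mathbb{P}_\Phi$.

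For the ``only if'' direction, I would argue by contraposition. Suppose some completely positive map $\phi$ lies in $\mathbb{P}_\Phi$. Since $\Phi$ is an interior point of $\mathbb{P}_\Phi$, by the definition of interior point recalled in the paper, there exists $t>1$ such that $(1-t)\phi+t\Phi \in \mathbb{P}_\Phi \subseteq \mathbb{P}_1$. This directly violates condition (O2), hence $\Phi$ is not optimal.

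For the ``if'' direction, again by contraposition, suppose $\Phi$ is not optimal. Then (O2) fails, so there exist a completely positive map $\phi$ and $t>1$ with $\Psi := (1-t)\phi+t\Phi \in \mathbb{P}_1$. Solving for $\Phi$ yields
\begin{equation*}
\Phi \;=\; \tfrac{1}{t}\,\Psi + \tfrac{t-1}{t}\,\phi,
\end{equation*}
which is a strict convex combination (since $0<1/t<1$) of two elements of $\mathbb{P}_1$, because $\phi$ is completely positive and hence positive. As $\Phi$ lies in the face $\mathbb{P}_\Phi$, the face property forces both $\Psi$ and $\phi$ to lie in $\mathbb{P}_\Phi$; in particular $\mathbb{P}_\Phi$ contains the completely positive map $\phi$.

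I do not anticipate a serious obstacle: once one invokes the standard interior/face lemmas from the reference \cite{rock}, both directions are one-line manipulations of convex combinations, and the translation between (O1) and (O2) has already been spelled out in the preceding paragraph. The only point requiring a moment of care is to verify that the extension parameter $t>1$ in direction (i) can be chosen without leaving $\mathbb{P}_\Phi$ when $\phi \ne \Phi$; if $\phi = \Phi$ then $\Phi$ itself is completely positive and (O2) fails with any $t>1$ and $\phi=\Phi$, so both implications are trivial in that degenerate case.
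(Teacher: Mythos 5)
Your proof is correct and follows essentially the same route as the paper: the forward direction uses the fact that $\Phi$ is an interior point of $\mathbb{P}_\Phi$ to extend the segment from a completely positive $\phi\in\mathbb{P}_\Phi$ past $\Phi$ and contradict (O2), and the reverse direction writes $\Phi$ as a nontrivial convex combination of $\phi$ and $\psi=(1-t)\phi+t\Phi$ and invokes the face property. The extra remark about the case $\phi=\Phi$ is harmless but unnecessary, since the paper's definition of interior point already covers $y=x$.
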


\begin{proof}
Suppose that $\Phi$ is optimal. If there is a completely positive map in $\mathbb P_\Phi$ then we have a contradiction
to the condition (O2), since $\Phi$ is an interior point of $\mathbb P_\Phi$.

For the converse, suppose that $\Phi$ is not optimal.
Then there is a completely positive map $\phi$ and $t>1$ such that $\psi=(1-t)\phi+t\Phi$ belongs to $\mathbb P_1$.
Then $\Phi\in\mathbb P_\Phi$ is a nontrivial convex combination of $\phi$ and $\psi$. Since $\mathbb P_\Phi$
is a face, we conclude that the completely positive map $\phi$ lies in $\mathbb P_\Phi$.
\end{proof}

The faces of the cone $\mathbb D$ are determined \cite{kye_decom}
by pairs of subspaces of the space $M_{m\times n}$ of all $m\times n$ matrices, which is inner space isomorphic
to $\mathbb C^n\otimes\mathbb C^m$. Note that a product vector $\bar y\otimes x$ corresponds to
the rank one matrix $xy^*\in M_{m\times n}$ with this isomorphism,
where $x\in\mathbb C^n$ and $y\in\mathbb C^m$ are considered as column matrices.
More precisely, every face of $\mathbb D$ is of the form
$$
\sigma(D,E)=
\conv\{\phi_{V_i},\ \phi^{W_j}: V_i\in D,\ W_j\in E\}
$$
for a pair $(D,E)$ of subspaces of $M_{m\times n}$.
This pair $(D,E)$ is uniquely determined by a given
face $F$ under the condition
$$
D=\spa\{V:\phi_V\in F\},
\qquad
E=\spa\{W:\phi^W\in F\}.
$$
It should be noted that it is difficult in general
to determine if an arbitrary given pair of subspaces gives rise to a face of $\mathbb D$.
Actually, the notion of optimality turns out to be closely related with the question as to which pairs give rise to
faces of the cone $\mathbb D$.

Note that the completely copositive map $\phi^W$ is completely positive
if and only if $W$ is of rank one, and we have the relation
$\phi^{xy^*}=\phi_{\bar xy^*}$
in this case. Therefore, Proposition \ref{ew} tells us that if $\Phi$ is an optimal decomposable
entanglement witness then the face $\mathbb D\cap \mathbb P_\Phi$ of $\mathbb D$
must be of the form $\sigma(0,E_1)$ for a completely entangled subspace
$E_1$ of $M_{m\times n}$.
In particular, $\Phi$ must be completely copositive, and of the form
\begin{equation}\label{cocom}
\Phi=\phi^{W_1}+\phi^{W_2}+\cdots+\phi^{W_\nu},
\end{equation}
Since the relation $\mathbb D_\Phi\subset \mathbb D\cap \mathbb P_\Phi$ holds in general, we see that
$\mathbb D_\Phi$ is of the form $\sigma(0,E_2)$ for a subspace $E_2$ of $E_1$.
We also note that the map $\Phi$ of the form (\ref{cocom}) is an interior point of the convex set $\sigma(0,E_3)$
with $E_3=\spa\{W_1,\dots,W_\nu\}$ by
\cite{kye-cambridge}. Therefore, we conclude that $E_2=E_3$, and
$$
\mathbb D_\Phi=\sigma (0,E),\qquad {{\rm where}}\ E=\spa\{W_1,\dots,W_\nu\}.
$$
In this case, we say that $\Phi$ is {\sl supported} on the space $E=\spa\{W_1,W_2,\dots,W_\nu\}$.
Especially, we see that $E$ must be completely entangled and the convex set $\sigma(0,E)$ must be a face of $\mathbb D$,
whenever $\Phi$ in (\ref{cocom}) gives rise to an optimal entanglement witness.

Recall again that a completely copositive map $\Phi$ corresponds the partial transpose
$Q^\tau$ of a positive semi-definite matrices $Q$,
and the support of $\Phi$ in the above sense corresponds to the support of $Q$ itself.
Therefore, this recovers the necessary condition (B) for the optimality given in \cite{lew00},
as was mentioned in Introduction.

If $\Phi$ is an interior point of the convex cone $\mathbb P_1$ then the face $\mathbb P_\Phi$ is the
whole cone $\mathbb P_1$ itself. Therefore, Proposition \ref{ew} tells us that if $\Phi$ is optimal
then it must be on the boundary of the cone $\mathbb P_1$.
Note that the boundary structure
of a convex set is determined by their maximal faces. It is also known that every maximal face of the cone $\mathbb P_1$
is determined by a product vector, as was studied in \cite{kye_canad}.
More precisely, $\Phi$ lies on the boundary of the cone
$\mathbb P_1$ if and only if there is a rank one projector $(\bar y\otimes x)(\bar y\otimes x)^*\in \mathbb V_1$
on the product vector $\bar y\otimes x$
such that
$$
\langle (\bar y\otimes x)(\bar y\otimes x)^*,\Phi\rangle =0.
$$
If $\Phi=\phi^V$ is a completely copositive map then we have
$$
\langle (\bar y\otimes x)(\bar y\otimes x)^*,\phi^V\rangle=|(V\, |\, \bar y\otimes \bar x)|^2
$$
as in \cite{ha+kye_exposed},
where $(\ \, |\, \ )$ denotes the inner product of $M_{m\times n}$ which is linear in the first variable
and conjugate-linear in the second variable.
Therefore, if $\Phi$ is given by (\ref{cocom}) then we have the following relation
$$
\langle (\bar y\otimes x)(\bar y\otimes x)^*,\Phi\rangle
=\sum_{i=1}^\nu |(W_i\, |\, \bar y\otimes \bar x)|^2.
$$
This shows that $E^\perp$ has a product vector.
We summarize as follows:

\begin{theorem}{Theorem} \label{main}
Let $\Phi$ be a completely copositive map
supported on the subspace $E$ of $M_{m\times n}=\mathbb C^n\otimes\mathbb C^m$. If $\Phi$ gives rise to
an optimal entanglement witness then we have the following:
\begin{enumerate}
\item[{\rm (i)}]
$E$ is completely entangled.
\item[{\rm (ii)}]
$E^\perp$ has a product vector.
\item[{\rm (iii)}]
The convex set $\sigma(0,E)$ is a face of $\mathbb D$.
\end{enumerate}
\end{theorem}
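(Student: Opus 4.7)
The plan is to package the preceding exposition into a clean deduction of the three conditions from Proposition \ref{ew} and the known description of the faces of $\mathbb D$. The argument splits naturally: parts (i) and (iii) come from locating $\mathbb D_\Phi$ precisely inside the face $\mathbb D\cap\mathbb P_\Phi$, while part (ii) comes from the fact that optimality forces $\Phi$ to lie on the boundary of $\mathbb P_1$.

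First I would exploit optimality at the level of faces. Since $\Phi$ is completely copositive and optimal, Proposition \ref{ew} forbids any completely positive map from $\mathbb P_\Phi$, hence from the face $\mathbb D\cap\mathbb P_\Phi$ of $\mathbb D$. By the parametrization of faces of $\mathbb D$, this face is $\sigma(D',E')$ for some pair of subspaces, and the relation $\phi^{xy^*}=\phi_{\bar x y^*}$ together with the absence of completely positive maps forces $D'=0$ and $E'$ to be completely entangled. Writing $\Phi=\phi^{W_1}+\cdots+\phi^{W_\nu}$, the result from \cite{kye-cambridge} gives that $\Phi$ is an interior point of $\sigma(0,\spa\{W_1,\dots,W_\nu\})$, so $\mathbb D_\Phi=\sigma(0,E)$ with $E=\spa\{W_1,\dots,W_\nu\}$. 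Since $\mathbb D_\Phi\subseteq\mathbb D\cap\mathbb P_\Phi=\sigma(0,E')$ and $E'$ is completely entangled, $E\subseteq E'$ is completely entangled as well; this yields (i), and (iii) is immediate because $\mathbb D_\Phi$ is always a face of $\mathbb D$.

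For (ii) I would invoke the boundary characterization of $\mathbb P_1$ recalled just before the theorem statement: optimality of $\Phi$ forbids $\Phi\in\inte\mathbb P_1$, so $\Phi$ lies on the boundary and therefore annihilates some rank one positive semi-definite matrix $(\bar y\otimes x)(\bar y\otimes x)^*$ attached to a product vector. Expanding via
\[
\langle (\bar y\otimes x)(\bar y\otimes x)^*,\phi^V\rangle = |(V\,|\,\bar y\otimes\bar x)|^2
\]
turns the pairing condition into $\sum_{i=1}^\nu|(W_i\,|\,\bar y\otimes\bar x)|^2=0$, so $\bar y\otimes\bar x\perp W_i$ for all $i$, and $\bar y\otimes\bar x$ is the required nonzero product vector in $E^\perp$.

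The only substantive ingredient beyond bookkeeping is the interiority statement from \cite{kye-cambridge}: without it one would only place $\mathbb D_\Phi$ inside some $\sigma(0,E')$ with $E'\supseteq E$, which is enough for complete entanglement of $E$ (part (i)) but not for $\sigma(0,E)$ itself being a face (part (iii)). I therefore expect (iii) to be the delicate point, hinging entirely on identifying $E$ — not merely some superspace of it — as the second component of the face $\mathbb D_\Phi$.
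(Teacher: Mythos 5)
Your proposal is correct and follows essentially the same route as the paper: parts (i) and (iii) are obtained exactly as in the text preceding the theorem, by combining Proposition \ref{ew} with the parametrization $\sigma(D,E)$ of faces of $\mathbb D$, the relation $\phi^{xy^*}=\phi_{\bar x y^*}$, and the interiority result of \cite{kye-cambridge} identifying $\mathbb D_\Phi=\sigma(0,E)$; part (ii) is derived, as in the paper, from the boundary characterization of $\mathbb P_1$ via maximal faces determined by product vectors and the pairing formula $\langle(\bar y\otimes x)(\bar y\otimes x)^*,\phi^V\rangle=|(V\,|\,\bar y\otimes\bar x)|^2$. You also correctly single out the identification of $E$ itself (not merely a superspace) as the point where the \cite{kye-cambridge} interiority statement is indispensable, which is precisely how the paper argues.
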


We note that the condition (ii) seems to be known to specialists as it was pointed out by one of the referee,
even though it is not mentioned explicitly in the literature. See \cite{2} and \cite{1}.
Note also that the condition (iii) is equivalent to saying that $\sigma(E,0)$ is a face of $\mathbb D$
by the symmetry. This is exactly
the condition (D) mentioned in Introduction.

\section{Examples}

In the case of $m=2$, it was shown in \cite{aug} that
if $E$ is completely entangled
then $E^\perp$ is spanned by product vectors, and the converse of Theorem \ref{main} holds.
In this case, $D(0,E)$ is a face of $\mathbb D$ which is actually exposed by separable states
with respect to the duality (\ref{ddual}) by \cite{choi-kye-exp}.
In the case of $m=n=3$, it is well known that
the maximum dimension of completely entangled subspaces is $4$. Therefore, when $m=2$ or $m=n=3$
the condition (i) of Theorem \ref{main} actually implies the condition (ii).

Consider a $3\otimes 3$ PPT state $A$ with
$A=A^\tau$ which is supported on a $4$-dimensional completely entangled subspace $E$, as it was given in
\cite{ha+kye}. Since the $9\times 9$ matrix $A$ represents both a completely positive map and a completely copositive map,
this is a common interior point of the different convex subsets $\sigma(E,0)$ and $\sigma(0,E)$ of the cone
$\mathbb D$, and so we conclude that $\sigma(0,E)$ is not a face of the cone $\mathbb D$. Hence, any interior point
of $\sigma(0,E)$ which is not completely positive gives us an example of a non-optimal entanglement witness
arising from a completely copositive map supported on the completely entangled subspace $E$.

To get an explicit example in terms of positive semi-definite
block-matrices, we begin with
$$
A_0=
\left(\begin{array}{ccccccccc}
1&\cdot&\cdot & \cdot&1&\cdot & \cdot&\cdot&1\\
\cdot&\lambda^2&\cdot & 1&\cdot&\cdot & \cdot&\cdot&\cdot\\
\cdot&\cdot&\mu^2 & \cdot&\cdot&\cdot & 1&\cdot&\cdot\\
\cdot&1&\cdot & \mu^2&\cdot&\cdot & \cdot&\cdot&\cdot\\
1&\cdot&\cdot & \cdot&1&\cdot & \cdot&\cdot&1\\
\cdot&\cdot&\cdot & \cdot&\cdot&\lambda^2 & \cdot&1&\cdot\\
\cdot&\cdot&1 & \cdot&\cdot&\cdot & \lambda^2&\cdot&\cdot\\
\cdot&\cdot&\cdot & \cdot&\cdot&1 & \cdot&\mu^2&\cdot\\
1&\cdot&\cdot & \cdot&1&\cdot & \cdot&\cdot&1
\end{array}\right),\
A_1=
\left(\begin{array}{ccccccccc}
1&\cdot&\cdot & \cdot&1&\cdot & \cdot&\cdot&1\\
\cdot&\cdot&\cdot & \cdot&\cdot&\cdot & \cdot&\cdot&\cdot\\
\cdot&\cdot&\cdot & \cdot&\cdot&\cdot & \cdot&\cdot&\cdot\\
\cdot&\cdot&\cdot & \cdot&\cdot&\cdot & \cdot&\cdot&\cdot\\
1&\cdot&\cdot & \cdot&1&\cdot & \cdot&\cdot&1\\
\cdot&\cdot&\cdot & \cdot&\cdot&\cdot & \cdot&\cdot&\cdot\\
\cdot&\cdot&\cdot & \cdot&\cdot&\cdot & \cdot&\cdot&\cdot\\
\cdot&\cdot&\cdot & \cdot&\cdot&\cdot & \cdot&\cdot&\cdot\\
1&\cdot&\cdot & \cdot&1&\cdot & \cdot&\cdot&1
\end{array}\right),
$$
where $\lambda\mu=1$ and $\lambda\neq 1$.
Note that the matrix $A_0$ is just the matrix $A$ in
\cite{ha+kye}. It is supported
on the
subspace $E$ spanned by
$$
|00\ran +|11\ran+|22\ran,\qquad \lambda|01\ran+\mu|10\ran,\qquad \lambda|12\ran+\mu|21\ran,\qquad  \lambda|20\ran+\mu|02\ran,
$$
which is completely entangled.
We also note that  $A_1^\tau$ represents a positive map in $\sigma(0,E)$ which is not completely positive.
We see that the partial transposes $A_t^\tau$ of any matrices $A_t$ on the open line segment between
two matrices give rise to the required examples:  $A_t$
is positive semi-definite matrix supported on the
completely entangled subspace $E$. But, $A_t^\tau$ is not optimal since $A_0^\tau=A_0$ is positive semi-definite.
See the recent papers \cite{hhms} and \cite{sko} for more examples of PPT states
which are symmetric under the partial transpose.

In the case of $m=3$ and $n=4$,
the maximum dimension of completely entangled subspaces is $(3-1)\times (4-1)=6$. Furthermore,
generic $6$-dimensional subspaces of $M_{3\times 4}$ are completely entangled. Therefore,
it might be very reasonable to expect that there is a $6$-dimensional completely entangled subspace of
$M_{3\times 4}$ whose orthogonal complement is also completely entangled. Indeed, numerical results in
\cite{lein} indicate that this is the case. They produced $3\otimes 4$ PPT states
whose images and kernels are $6$-dimensional completely entangled subspaces.
We note that concrete examples were found recently by
Augusiak et al. \cite{asl} and
Skowronek \cite{sko-exam}.
More generally, Young-Hoon Kiem informed the author that if
$$
mn-(m-1)(n-1)\le k\le (m-1)(n-1)
$$
then $k$-dimensional subspaces of $M_{m\times n}$ are generically completely entangled subspaces with
the completely entangled orthogonal complements.

Characterizing completely entangled subspace $D$ satisfying the condition (iii) of Theorem \ref{main}
would be the first step to understand the whole facial structures of the cone $\mathbb D$: How the two convex
cones consisting of completely positive maps and completely copositive maps respectively are located each other to
generate the whole convex cone $\mathbb D$. See \cite{choi-kye-exp} for the partial results in the case of $m=2$.
It would be also interesting to know if there is a face of the form $\sigma(0,E)$ which not exposed. Note that every known example
of unexposed face of $\mathbb D$ is of the form $\sigma(D,E)$ with nonzero $D$ and $E$. See \cite{byeon-kye}
and \cite{choi-kye-exp}.

It would be interesting to know if the converse of Theorem \ref{main} holds or not.
If we add to the three conditions in Theorem \ref{main}
one more condition that the two faces $\mathbb D\cap \mathbb P_\Phi$ and $\mathbb D_\Phi$ coincide,
then we see that $\Phi$ is optimal by Proposition \ref{ew}.
To find conditions under which the relation $\mathbb D\cap \mathbb P_\Phi=\mathbb D_\Phi$ holds in the general situation
seems to be an important step to understand in what way the cone $\mathbb D$ sits down in the bigger cone $\mathbb P_1$.

The author is grateful to Kil-Chan Ha for valuable comments on the first draft of this note.
He is also grateful to Remigiusz Augusiak, Young-Hoon Kiem and \L ukasz Skowronek for valuable discussions on the topics.
Finally, he is grateful to one of the referees for bringing his attention to the papers \cite{2}, \cite{1}
and \cite{sar}.

\end{document}